\newcommand\numberthis{\addtocounter{equation}{1}\tag{\theequation}}
\newcommand{\bb}{\boldsymbol{\beta}}
\theoremstyle{plain}
\newtheorem{theorem}{Theorem}
\newtheorem{definition}{Definition}
\newtheorem{remark}{Remark}
\title{Transformed Fay-Herriot Model \\with Measurement Error in Covariates}
\author{Sepideh Mosaferi \footnote{Corresponding Author; E-mail address: \href{mailto:mosaferi@iastate.edu}{mosaferi@iastate.edu} \newline Sepideh Mosaferi is a Ph.D. Candidate with the Department of Statistics and Statistical Laboratory at Iowa State University (E-mail: \href{mailto:mosaferi@iastate.edu}{mosaferi@iastate.edu}). Malay Ghosh is a Professor with the Department of Statistics at University of Florida (E-mail: \href{mailto:ghoshm@ufl.edu}{ghoshm@ufl.edu}). Rebecca C. Steorts is an Assistant Professor with the Department of Statistical Science and Computer Science at Duke University (E-mail: \href{mailto:beka@stat.duke.edu}{beka@stat.duke.edu}).} \,\, Malay Ghosh \,\, Rebecca C. Steorts}
\date{February 16, 2021}
\begin{document}

\maketitle
\begin{abstract}
Statistical agencies are often asked to produce small area estimates (SAEs) for positively skewed variables.
When domain sample sizes are too small to support direct estimators, effects of skewness of the response variable can be large. As such, it is important to appropriately account for the distribution of the response variable given available auxiliary information. Motivated by this issue and in order to stabilize the skewness and achieve normality in the response variable, we propose an area-level log-measurement error model on the response variable. 
Then, under our proposed modeling framework, we derive an empirical Bayes (EB) predictor of positive small area quantities subject to the covariates containing measurement
error. We propose a corresponding mean squared prediction error (MSPE) of  EB predictor using both a jackknife and a bootstrap method. We show that the order of the bias is $O(m^{-1})$, where $m$ is the number of small areas. 
Finally, we investigate the performance of our methodology using both design-based  and model-based simulation studies.

\vspace{0.5em}

\noindent {KEYWORDS: Small area estimation; official statistics; Bayesian methods; jackknife; parametric bootstrap; applied statistics; simulation studies.}

\end{abstract}

\section{Introduction} \label{sec:introduction}
Typically, in small area measurement error models, both the response variable and covariate can be any real number (see Ybarra and Lohr (\citeyear{ybarra2008small}), Arima et al. (\citeyear{arima2017})). However, statistical agencies are often asked to produce small area estimates (SAEs) for skewed variables, which are also positive in $\mathbb{R}^{+}$. For instance, the Census of the Governments (CoG) provides information on roads, tolls, airports, and other similar information at the local-government level as defined by the United States Census Bureau (USCB).  Another example includes the United States National Agricultural Statistics Service (NASS), which  
provides estimates regarding crop harvests (see Bellow and Lahiri (\citeyear{bellow2011empirical})). The United States Natural Resources Conservation Service (NRCS) provides estimates regarding roads at the county-level (e.g., Wang and Fuller (\citeyear{wang2003mean})), and the Australian Agricultural and Grazing Industries Survey provides estimates of the total expenditures of Australian farms (e.g., Chandra and
Chambers (\citeyear{chandra2011small})). 

When domain sample sizes are too small to support direct estimators, the effect of skewness can be quite large, and it is critical to account for the distribution of the response variable given auxiliary information at hand. For a review of the SAE literature, we refer to recent work by Rao and Molina (\citeyear{rao2015small}) and Pfefferman (\citeyear{pfeffermann2013new}).
The case of positively skewed response variables is one such that the governing parameter in the Box-Cox transformation is zero. Due to the fact that the covariate in the model may be positively skewed and contains measurement error, this has received less attention in the literature. 
Throughout this paper, we explain the problem which is beyond a simple substitution and address some of its difficulties.

\subsection{Census of the Governments}
\label{sec:cog}
As mentioned in Sec.~\ref{sec:introduction}, our proposed framework is motivated by data that is positively skewed. One such data set is the Census of Governments (CoG), which is a survey data collected by the United States Census Bureau (USCB) periodically that provides comprehensive statistics about governments and governmental activities. Data is reported on government organizations, finances, and employment. For example, data from organizations refer to location, type, and characteristics of local governments and officials. Data from finances/employment refer to revenue, expenditure, debt, assets, employees, payroll, and benefits. 
 
We utilize data from the  CoG from 2007 and 2012 (\url{https://www.census.gov/econ/overview/go0100.html}). In the CoG, the small areas consist of the 48 states of the contiguous United States. These 48 areas contain 86,152 local governments defined by the USCB, such as airports, toll roads, bridges, and other federal government corporations. The parameter of interest is the average number of full-time employees per government at the state level from the 2012 data set, which can be defined as the total number of full-time employees from all local governments divided by the total number of local governments per state. The covariate of interest is the average number of full-time employees per government at the state level from the 2007 data set.  After studying residual plots and histograms, we observe skewed patterns in the average number of full-time employees in both the 2007 and 2012 data sets, which partially motivate our proposed framework. 

\subsection{Our Contribution}
\label{sec:contribution}
Motivated by issues that statistical agencies face with skewed response variables we make several  contributions to the literature. In order to stabilize the skewness and achieve normality in the response variable, we propose an area-level log-measurement error model on the response variable (Eq. (\ref{eqn:FHlog})). In addition, we propose a log-measurement error model on the covariates (Eq. (\ref{eqn:ME})). 
Next, under our proposed modeling framework, we derive an EB predictor of positive small area quantities subject to the covariates containing measurement error. In addition, we propose a corresponding estimate of the MSPE using a jackknife and a parametric bootstrap, where we illustrate that the order of the bias is $O(m^{-1})$ under standard regularity conditions.  We illustrate the performance of our methodology in both model-based simulation and design-based simulation studies. We summarize our conclusions and provide directions for future work.

The article is organized as follows. Sec.~\ref{sec:prior work} details the prior work related to our proposed methodology. In Sec.~\ref{sec:area-level-log}, we propose a log-measurement error
 model for the response variable. In addition, we consider a measurement error model of the covariates with a log transformation.
Further, we derive the EB predictor under our framework.
Sec.~\ref{sec:mspe} provides the MSPE for our EB predictor. We provide a decomposition of the MSPE to include the uncertainty of the EB predictor through unknown parameters. Sec.~\ref{sec:estimation-mspe} provides two estimators of the MSPE, namely a jackknife and a parametric bootstrap, where we prove that the order of the bias is $O(m^{-1})$ under standard regularity conditions. Sec.~\ref{sec:experiments} provides both design-based and model-based simulation studies. Sec.~\ref{sec:dis} provides a discussion and directions for future work.

\subsection{Prior Work}
\label{sec:prior work}
In this section, we review the prior literature most relevant to our proposed work. 
There is a rich literature on the area-level Fay-Herriot model, where various additive measurement error models have been proposed on the covariates. Ybarra and Lohr (\citeyear{ybarra2008small}) proposed the first additive measurement error model on the covariates. More specifically, the authors considered covariate information from another survey that was independent of the response variable. More recently, Berg and Chandra (\citeyear{berg2014small}) have proposed an EB predictor and an approximately unbiased MSE estimator under a unit-level log-normal model, where no measurement error is assumed present in the covariates. Turning to the Bayesian literature, Arima et al. (\citeyear{arima2017}), Arima et al. (\citeyear{arima2015bayesianbook}), and Arima et al. (\citeyear{arima2015bayesian}) have provided fully Bayesian solutions to the measurement error problem for both unit-level and area-level small area estimation problems.

Next, we discuss related literature regarding the proposed jackknife and parametric bootstrap estimator of the MSPE of the Bayes estimators, where the order of the bias is $O(m^{-1}),$ under standard regularity conditions. Our proposed jackknife estimator of the MSPE contrasts that of Jiang et al. (\citeyear{jiang2002unified}), who proposed an MSE using an orthogonal decomposition, where the leading term in the MSE does not depend on the area-specific response and is nearly unbiased. Given that the authors can make an 
 orthogonal decomposition, they can show that the order of the bias of the MSE is $o(m^{-1}),$ which contrasts our proposed approach. Under our approach, the leading term 
 depends on the area-specific response, and thus, the bias is of order $O(m^{-1})$.  Turning to the bootstrap, we utilize methods similar to Butar and Lahiri (\citeyear{butar2003measures}). Using this approach, we propose a parametric bootstrap estimator of the MSPE of our estimator.  In a similar manner to the jackknife, the order of the bias for the parametric bootstrap estimator of the MSPE is $O(m^{-1})$.

\section{Area-Level Logarithmic Model with Measurement Error}
\label{sec:area-level-log}

Consider $m$ small areas and let $Y_i$ ($i=1,\ldots,m$) denote the population characteristic of interest in area $i$,
where often the information of interest is a population mean or proportion. A primary survey provides a direct estimator $y_i$ of $Y_i$ for some or all of the $m$ small areas. 
In this section, we propose a measurement error model suitable for the inference of positively skewed response variable $y_i$. To achieve normality in the response variable, we therefore propose an area-level log-measurement error model on $Y_i.$ In the rest of this section, we explain our model and the desirable predictor.

Consider the following model:
\begin{align} \label{eqn:FHlog}
z_i=\theta_i + e_i,
\end{align}
where $z_i:=\log y_i$, $\theta_i:=\log Y_i$, and $e_i$ is the sampling error distributed as $e_i \sim N(0,\psi_i)$. Assume
\begin{equation*}
\theta_i=\sum_{k=1}^{p}\beta_k \log X_{ik} + \nu_i,
\end{equation*}
where $X_{ik}$ is the $k$-th covariate of the $i$-th small area, which is unknown but is observed by $x_{ik}$. The regression coefficient $\beta_k$ is unknown and must be estimated, and $\nu_i$ is the random effect distributed as $\nu_i \sim N(0,\sigma^2_{\nu})$, where $\sigma^2_{\nu}$ is unknown.

Our measurement error model for the case of positively skewed $X_{ik}$'s is proposed as 
\begin{equation*}
w_{ik}:= \log x_{ik}=\log X_{ik} + \eta_{ik}, \qquad  k=1,...,p,
\end{equation*}
or in a vector form 
\begin{equation} \label{eqn:ME}
\boldsymbol{w}_i = \boldsymbol{W}_i + \boldsymbol{\eta}_i, \qquad \boldsymbol{\eta}_i \sim N_p(\boldsymbol{0},\Sigma_i),
\end{equation}
where $\boldsymbol{w}_i=(w_{i1}, ..., w_{ip})^\top$ and $\boldsymbol{W}_i=(W_{i1}, ..., W_{ip})^\top$ for $W_{ik}=\log X_{ik}$.
Note that in Eq. (\ref{eqn:ME}), $\boldsymbol{W}_i$ is non-stochastic within the
class of functional measurement error models (c.f. Fuller (\citeyear{fuller2006measurement})).
We assume $\Sigma_i$ is known, and if it is unknown, it can be estimated using microdata or from another independent survey. We refer to Arima et al. (\citeyear{arima2017}) for further details of estimating $\Sigma_i$.

Now, one can write
\begin{align*}
\begin{cases}
z_i=\boldsymbol{W}_i^\top \bb+ \nu_i + \bb^\top \boldsymbol{\eta}_i+ e_i \\
\theta_i= \boldsymbol{W}_i^\top \bb + \nu_i + \bb^\top \boldsymbol{\eta}_i
\end{cases}
\end{align*}
where $\bb=(\beta_1, ..., \beta_p)^\top$. Thus, for the pair $(z_i,\theta_i)$, we have the following joint normal distribution
\begin{equation*}
\begin{pmatrix}
z_i \\ \theta_i 
\end{pmatrix} \sim N_2 
\Bigg[ \begin{pmatrix}
\boldsymbol{W}_i^\top \bb \\ \boldsymbol{W}_i^\top \bb 
\end{pmatrix},
\begin{pmatrix}
\bb^\top \Sigma_i \bb+\sigma^2_{\nu}+\psi_i & \bb^\top \Sigma_i \bb + \sigma^2_{\nu}  \\ \bb^\top \Sigma_i \bb + \sigma^2_{\nu} & \bb^\top \Sigma_i \bb + \sigma^2_{\nu}
\end{pmatrix}
\Bigg].
\end{equation*}
We assume all the sources of errors $(e_i,\nu_i,\boldsymbol{\eta}_i)$ for $i=1,...,m$ are mutually independent throughout the rest of the paper.

\begin{remark}
Eq. (\ref{eqn:FHlog}) is a Fay-Herriot model for $z_i$, however, the parameter of interest is $Y_i:=\exp(\theta_i)$ rather than $\theta_i.$ Slud and Maiti (\citeyear{slud2006mean}) and Ghosh et al. (\citeyear{ghosh2015benchmarked}) used a similar model in the absence of measurement errors in the covariates.
\end{remark}

Next, we give the following conditional distribution $[\theta_i|z_i]$ to later justify our Bayesian interpretation of the unknown interested  parameter $Y_i$:
\begin{equation*}
\theta_i | z_i \sim N \Big[ \boldsymbol{W}_i^\top \bb + \frac{\bb^\top \Sigma_i \bb + \sigma^2_{\nu}}{\bb^\top \Sigma_i \bb + \sigma^2_{\nu}+\psi_i} (z_i-\boldsymbol{W}_i^\top \bb), \bb^\top \Sigma_i \bb + \sigma^2_{\nu} - \frac{(\bb^\top \Sigma_i \bb + \sigma^2_{\nu})^2}{\bb^\top \Sigma_i \bb + \sigma^2_{\nu}+\psi_i} \Big],
\end{equation*}
i.e. 
\begin{equation*}
\theta_i | z_i \sim N \Big(\gamma_i z_i + (1-\gamma_i) \boldsymbol{W}_i^\top \bb , \gamma_i \psi_i \Big),
\end{equation*}
 where $\gamma_i = (\bb^\top \Sigma_i \bb + \sigma^2_{\nu})/(\bb^\top \Sigma_i \bb + \sigma^2_{\nu}+\psi_i)$.

Recall that the parameter of interest is 
$Y_i:=\exp(\theta_i)$ after transforming from the logarithmic scale back to the original scale. Therefore, the corresponding Bayes predictor is given by $\hat{Y}_i:=E(Y_i|z_i)$. By using the moment generating function of the normal distribution of $\theta_i|z_i$, the Bayes predictor has the form of $\hat{Y}_i=\exp\{\gamma_i z_i+(1-\gamma_i)\boldsymbol{W}_i^\top \bb+\gamma_i\psi_i/2\}$.
In practice, $\boldsymbol{W}_{i}$ is unobserved, and since $E(\boldsymbol{w}_i)=\boldsymbol{W}_i$, we can replace it with the observed $\boldsymbol{w}_i$. Also, $\bb$ and $\sigma^2_{\nu}$ are unknown, and we need to replace them with their consistent estimators. Therefore, the EB predictor of $Y_i$ is
\begin{align} \label{eqn:EBpredictor}
\hat{Y}_i^{\text{EB}} = \exp\Big\{\hat{\gamma}_i z_i+(1-\hat{\gamma}_i) \boldsymbol{w}_i^\top \hat{\bb}+ \frac{\hat{\gamma}_i \psi_i}{2} \Big\}.
\end{align}

\subsection{Estimation of Unknown Parameters}
\label{sec:parameters}
In this section, we discuss estimation of the unknown parameters $\bb$ and $\sigma^2_{\nu}.$ First, an estimator of $\bb$ is obtained by solving the equation
\begin{equation} \label{eqn:bb}
\sum_{i=1}^{m} \Big[ D_i \Big(\boldsymbol{w}_i \boldsymbol{w}_i^\top - \Sigma_i \Big) \Big] \bb = \sum_{i=1}^{m} D_i \boldsymbol{w}_i z_i.
\end{equation}
The justification for Eq. (\ref{eqn:bb}) is as follows. Let $\boldsymbol{z}=(z_1,...,z_m)^\top$ and $\boldsymbol{W}^\top = (\boldsymbol{W}_1, ..., \boldsymbol{W}_m)$. Then, $\boldsymbol{z} \sim N_m(\boldsymbol{W} \bb, D^{-1})$ where $D^{-1}= \text{diag}(D_1^{-1}, ..., D_m^{-1})$ and $D_i^{-1}= \bb^\top \Sigma_i \bb + \sigma^2_{\nu}+\psi_i$. Hence, an estimator of $\bb$ is obtained by solving  
\begin{equation*}
\bb = \Big(\boldsymbol{W}^\top D \boldsymbol{W}\Big)^{-1} \boldsymbol{W}^\top D \boldsymbol{z}= \Big(\sum_{i=1}^{m} D_i \boldsymbol{W}_i \boldsymbol{W}_i^\top \Big)^{-1} \sum_{i=1}^{m} D_i \boldsymbol{W}_i z_i.
\end{equation*}

Now, notice that $E(\boldsymbol{w}_i \boldsymbol{w}_i^\top)= \boldsymbol{W}_i \boldsymbol{W}_i^\top + \Sigma_i$ and $E(\boldsymbol{w}_i)=\boldsymbol{W}_i$. Hence, we estimate $\bb$ from
\begin{equation*}
\sum_{i=1}^{m} \Big[ D_i \Big(\boldsymbol{w}_i \boldsymbol{w}_i^\top - \Sigma_i \Big) \Big] \bb = \sum_{i=1}^{m} D_i \boldsymbol{w}_i z_i.
\end{equation*}
However, $D_i$ is not known as both $\bb$ and $\sigma^2_{\nu}$ are unknown. Take $E(z_i-\boldsymbol{w}_i^\top \bb)^2 = \sigma^2_{\nu}+\psi_i$. Then $\sigma^2_{\nu}$ can be estimated from
\begin{equation} \label{eqn:sigma}
m^{-1} \sum_{i=1}^{m} \Big(z_i-\boldsymbol{w}_i^\top \bb \Big)^2 - m^{-1} \sum_{i=1}^{m} \psi_i.
\end{equation}
If the above is less than zero, estimate $\sigma_\nu^2$ as zero.
One can estimate $\bb$ and $\sigma^2_{\nu}$ by iteratively solving the Eqs. (\ref{eqn:bb}) and (\ref{eqn:sigma}).

\subsection{Mean Squared Prediction Error of the EB Predictor}
\label{sec:mspe}
In this section, we first define the MSPE of the EB predictor $\hat{Y}_i^{\text{EB}}.$ 
Second, we show that the cross-product term of the MSPE of the EB predictor $\hat{Y}_i^{\text{EB}}$ is exactly zero. Now, we introduce notation that will be used throughout the rest of the paper. Let

\begin{align*}
M_{1i} & :=E[(\hat{Y}_i-Y_i)^2|z_i] \\ 
 & =\exp\Big\{\psi_i\gamma_i\Big\}\Big[\exp\Big\{\psi_i\gamma_i\Big\}-1\Big] \exp\Big\{2\Big[\gamma_i z_i+(1-\gamma_i) \boldsymbol{W}_i^\top \bb \Big]\Big\} \\ 
M_{2i} & := E[(\hat{Y}_i^{\text{EB}}-\hat{Y}_i)^2|z_i], \quad  M_{3i}:= E[(\hat{Y}_i^{\text{EB}}-\hat{Y}_i)(\hat{Y}_i-Y_i)|z_i].
\end{align*}

Note that we estimate $\boldsymbol{W}_i$ with $\boldsymbol{w}_i$, and the term $M_{1i}$ depends on the  area-specific response variable $z_i$ unlike Jiang et al. (\citeyear{jiang2002unified}), and its estimator has bias of order $O(m^{-1})$. Since we wish to include the uncertainty of the EB predictor $\hat{Y}_i^{\text{EB}}$ with respect to the unknown parameters $\bb$ and $\sigma^2_{\nu}$, we decompose the MSPE into three terms using Definition \ref{eqn:mspe}. 

\begin{definition}
\label{eqn:mspe}
The MSPE of the EB predictor $\hat{Y}^{\text{EB}}_i$ is
\begin{align*} 
\text{MSPE}(\hat{Y}^{\text{EB}}_i) & =E[(\hat{Y}_i^{\text{EB}}-Y_i)^2|z_i]\notag \\
& \equiv E[(\hat{Y}_i-Y_i)^2|z_i] + E[(\hat{Y}_i^{\text{EB}}-\hat{Y}_i)^2|z_i] + 2 E[(\hat{Y}_i^{\text{EB}}-\hat{Y}_i)(\hat{Y}_i-Y_i)|z_i] \notag \\
& = M_{1i}+M_{2i}+2M_{3i},
\end{align*}
where we show below that $M_{3i} = 0.$
\end{definition}

To show that the cross product, $M_{3i}$ goes to 0, recall the Bayes estimator is $$E[\hat{Y}_i] = E[Y_i \mid z_i] \implies E[\hat{Y}_i-Y_i \mid z_i] = 0.$$ Consider
\begin{align*}
M_{3i} &= E[(\hat{Y}_i^{\text{EB}}-\hat{Y}_i)(\hat{Y}_i-Y_i)|z_i] \\
& = E\Big\{(\hat{Y}_i^{\text{EB}}-\hat{Y}_i) E\Big((\hat{Y}_i-Y_i)\mid z_i\Big)\Big|z_i\Big\}=0. 
\end{align*}

\section{Jackknife and Parametric Bootstrap Estimators of the MSPE}
\label{sec:estimation-mspe}
In this section, we propose two estimators for the MSPE of the EB predictor $\hat{Y}_i^{\text{EB}}.$ 
First, we propose a jackknife estimator of the MSPE. Second, we propose a parametric bootstrap estimator of the MSPE. The expectation of  the  proposed  measure  of uncertainty based on both methods  is correct up to the order $O(m^{-1})$ for the EB predictor.  

\subsection{Jackknife Estimator of the MSPE}
\label{sec:jack}
In this section, we propose a jackknife estimator of the MSPE of the  EB predictor $\hat{Y}_i^{\text{EB}},$ denoted by $\text{mspe}_J(\hat{Y}_i^{\text{EB}}).$ 
We prove the order of the bias of $\text{mspe}_J(\hat{Y}_i^{\text{EB}})$ is correct up to the order $O(m^{-1})$ under six regularity conditions. 
We propose the following jackknife estimator:

\begin{align} \label{eqn:jack-estimator}
\text{mspe}_J(\hat{Y}^{\text{EB}}_i)=\hat{M}_{1i,J}+\hat{M}_{2i,J} \quad \text{where}
\end{align}
\begin{align*}
\hat{M}_{1i,J}=
\hat{M}_{1i}-\frac{m-1}{m}\sum\limits_{j=1}^{m}(\hat{M}_{1i}-\hat{M}_{1i(-j)}) \quad
\text{and} \quad
\hat{M}_{2i,J}=\frac{m-1}{m}\sum\limits_{j=1}^{m}(\hat{Y}^{\text{EB}}_i-\hat{Y}^{\text{EB}}_{i(-j)})^2,
\end{align*}
where $(-j)$ denotes all areas except the $j$-th area. Therefore, let
\begin{align*} \label{eq:M1ihat}
\hat{M}_{1i} & :=M_{1i}(\hat{\sigma}^2_{\nu},\hat{\bb})\\
& =\exp\Big\{\psi_i\hat{\gamma}_i\Big\}\Big[\exp\Big\{\psi_i\hat{\gamma}_i\Big\}-1\Big] \exp\Big\{2\Big[\hat{\gamma}_i z_i+(1-\hat{\gamma}_i) \boldsymbol{w}_i^\top \hat{\bb}\Big]\Big\}, \numberthis   \\
\hat{M}_{1i(-j)} & = \exp\Big\{\psi_i\hat{\gamma}_{i(-j)}\Big\}\Big[\exp\Big\{\psi_i\hat{\gamma}_{i(-j)}\Big\}-1\Big] \exp\Big\{2\Big[\hat{\gamma}_{i(-j)} z_i+(1-\hat{\gamma}_{i(-j)}) \boldsymbol{w}_i^\top \hat{\bb}_{(-j)} \Big]\Big\}, \quad \text{and} \\
\hat{Y}^{\text{EB}}_{i(-j)} & = \exp \Big\{\hat{\gamma}_{i(-j)} z_i + (1-\hat{\gamma}_{i(-j)}) \boldsymbol{w}_i^\top \hat{\bb}_{(-j)} + \frac{\psi_i \hat{\gamma}_{i(-j)}}{2} \Big\}.
\end{align*}

Note that for all $[.]_{(-j)}$ cases, the $\phi=(\bb,\sigma^2_{\nu})^\top$ estimators should plug into the expressions where the data is based on all the areas other than $j$. We define some notation and then establish six regularity conditions used in Theorem \ref{thm:thm1}.
Let $\ell(\cdot|z_i)$ denote the conditional likelihood function. We define the corresponding first, second, and third derivatives of the conditional likelihood function by $\ell_i^{'}(\phi| z_i)$, $\ell_i^{''}(\phi| z_i),$ and $\ell_i^{'''}(\phi| z_i)$, respectively.
Now, assume the following six regularity conditions: 

\noindent\textit{Condition 1.} Define $\phi^\top=(\bb,\sigma^2_{\nu}) \in \Theta$ where $\Theta$ is a compact set such that $\Theta \subseteq (\mathbb{R}^{p},\mathbb{R}^{+})$.

\noindent\textit{Condition 2.} Assume $\hat{\phi}$ is a consistent estimator for $\phi,$ i.e. $\hat{\phi} \xrightarrow{p} \phi$.

\noindent\textit{Condition 3.} Assume  $\ell_i^{'}(\phi| z_i)$ and $\ell_i^{''}(\phi| z_i)$ both exist for $i = 1,\ldots m$, almost surely in probability.

\noindent\textit{Condition 4.} Assume $E\{\ell_i^{'}(\phi|z_i)| \phi\}=0$ for $i = 1,\ldots, m$.

\noindent\textit{Condition 5.} Assume $\ell_i^{''}(\phi| z_i)$ is a continuous function of $\phi$ for $i = 1, ..., m$, almost surely in probability, where $E\{\ell_i^{''}(\phi|z_i)\}$ is positive definite, uniformly bounded away from $0$, and is a measurable function of $z_i$.

\noindent\textit{Condition 6.} Assume $E\{|\ell_i^{'}(\phi|z_i)|^{4+\delta}\}$, $E\{|\ell_i^{''}(\phi|z_i)|^{4+\delta}\}$, and  $E\{\sup_{c \in (-\epsilon,\epsilon)}|\ell_i^{'''}(\phi+c|z_i)|^{4+\delta}\}$ are uniformly bounded for $i = 1,\ldots m$ under some $\epsilon > 0$ and $\delta > 0$.

\begin{theorem}
\label{thm:thm1}
Assume Conditions 1--6 hold. Then
$$E[\text{mspe}_J(\hat{Y}_i^{\text{EB}})]=\text{MSPE}(\hat{Y}_i^{\text{EB}})+O(m^{-1}).$$
\end{theorem}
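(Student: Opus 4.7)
The plan is to use the already-established identity $M_{3i}=0$ to reduce the bias to
\[
E[\text{mspe}_J(\hat{Y}_i^{\text{EB}})] - \text{MSPE}(\hat{Y}_i^{\text{EB}}) = \bigl(E[\hat{M}_{1i,J}] - M_{1i}\bigr) + \bigl(E[\hat{M}_{2i,J}] - M_{2i}\bigr),
\]
and then bound each piece by $O(m^{-1})$ separately. The preliminary step, used throughout, is to establish under Conditions 1--6 the standard M-estimator stochastic expansion $\hat{\phi} - \phi = m^{-1}\sum_{k=1}^{m} J(\phi)^{-1} \ell_k'(\phi \mid z_k) + O_p(m^{-1})$ with $J(\phi) = -E[\ell_i''(\phi \mid z_i)]$, together with the delete-one analogue $\hat{\phi}_{(-j)} - \hat{\phi} = O_p(m^{-1})$ uniformly in $j$. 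Condition 1 gives compactness and existence of a maximizer; Conditions 3--5 give a well-defined score expansion via an implicit function theorem and the nondegenerate expected Hessian; and Condition 6 provides the uniform $(4+\delta)$-moment bounds needed to upgrade $O_p$ statements to $O$ bounds on expectations through uniform integrability.

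For the $\hat{M}_{1i,J}$ piece, observe that $\phi \mapsto M_{1i}(\phi)$ is smooth (an exponential-polynomial in $\phi$). A second-order Taylor expansion of $M_{1i}(\hat{\phi})$ and $M_{1i}(\hat{\phi}_{(-j)})$ around $\phi$, conditioned on $z_i$, with the third-order remainder controlled via Condition 6, yields
\[
E[\hat{M}_{1i} \mid z_i] = M_{1i}(\phi) + \tfrac{1}{2}\mathrm{tr}\bigl(\nabla^2 M_{1i}(\phi)\,\mathrm{Var}(\hat{\phi})\bigr) + \nabla M_{1i}(\phi)^\top \mathrm{Bias}(\hat{\phi}) + O(m^{-3/2}),
\]
and the analogous expression for $\hat{M}_{1i(-j)}$ with $\mathrm{Var}(\hat{\phi}_{(-j)})$ replacing $\mathrm{Var}(\hat{\phi})$. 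Both the full-sample and delete-one plug-in estimators have bias of order $O(m^{-1})$; the jackknife linear combination is engineered so that the leading $m^{-1}$ terms partially cancel, leaving a residual of order $O(m^{-1})$---not $o(m^{-1})$ as in Jiang et al., because the leading bias here depends on the area-specific response $z_i$, as the authors explicitly note. For the $\hat{M}_{2i,J}$ piece, first-order Taylor-expand $\hat{Y}_i^{\text{EB}} - \hat{Y}_{i(-j)}^{\text{EB}} \approx \nabla_\phi \hat{Y}_i(\phi)^\top(\hat{\phi} - \hat{\phi}_{(-j)})$, combine with the classical jackknife variance identity $\frac{m-1}{m}\sum_j (\hat{\phi}_{(-j)} - \hat{\phi})(\hat{\phi}_{(-j)} - \hat{\phi})^\top = \mathrm{Var}(\hat{\phi}) + O_p(m^{-2})$, and use the delta-method representation of $M_{2i}$ as $\nabla_\phi \hat{Y}_i(\phi)^\top \mathrm{Var}(\hat{\phi})\nabla_\phi \hat{Y}_i(\phi)$ to conclude $E[\hat{M}_{2i,J}] = M_{2i} + O(m^{-1})$. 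Summing the two bounds gives the theorem.

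The main obstacle is controlling the Taylor remainders, because both $M_{1i}$ and $\hat{Y}_i^{\text{EB}}$ are exponentials of linear combinations of $z_i$, $\boldsymbol{w}_i^\top\hat{\bb}$, and $\hat{\gamma}_i\psi_i$. Condition 6 directly controls polynomial moments of the score derivatives, but the exponential form requires further bounds on quantities such as $E\bigl[\exp\bigl(c\,|\boldsymbol{w}_i^\top(\hat{\bb}-\bb)|\bigr)\bigr]$ and their delete-one analogues; I would obtain these by combining the compactness of $\Theta$ (Condition 1, which bounds $\hat{\bb}$ a priori), the Gaussianity of $(\boldsymbol{\eta}_i,\nu_i,e_i)$ (which gives $\boldsymbol{w}_i$ and $z_i$ subgaussian tails), and the rate $\hat{\bb}-\bb = O_p(m^{-1/2})$ established above. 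A secondary subtlety is that the components of the bias depending on $z_i$ do not average away across the jackknife replicates; tracking these carefully is precisely what forces the bound to $O(m^{-1})$ rather than the $o(m^{-1})$ achievable in the non-area-specific setting of Jiang et al.
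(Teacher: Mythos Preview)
Your proposal is correct and follows essentially the same route as the paper: decompose the bias into the $\hat{M}_{1i,J}$ and $\hat{M}_{2i,J}$ pieces, Taylor-expand each around $\phi$, and control the remainders via the moment bounds in Condition~6. The paper's only notable departures are that it derives the stochastic expansions of $\hat{\phi}-\phi$ and $\hat{\phi}-\hat{\phi}_{(-j)}$ explicitly via Householder's iteration (citing Lohr and Rao, 2009, and Theorem~2.1 of Jiang et al., 2002) rather than the generic M-estimator/implicit-function argument you sketch, and it sharpens the $\hat{M}_{2i,J}$ bias to $o(m^{-1})$ rather than your $O(m^{-1})$---neither of which changes the final $O(m^{-1})$ conclusion.
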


\begin{proof}
Define

\begin{align*}
E(\text{mspe}_J(\hat{Y}_i^{\text{EB}})) & \equiv E(\hat{M}_{1i,J}+\hat{M}_{2i,J}) \\
& = E \Big(\hat{M}_{1i}-\frac{m-1}{m} \sum_{j=1}^{m} [(\hat{M}_{1i}-\hat{M}_{1i(-j)})|z_i] \Big) \\
& \quad + \frac{m-1}{m} E \Big(\sum_{j=1}^{m} [(\hat{Y}_i^{\text{EB}}-\hat{Y}_{i(-j)}^{\text{EB}})^2|z_i] \Big).
\end{align*}

\noindent Also, define a remainder term $r_i$ that is bounded in absolute value by $R_i$ such that, 

$$|r_i| \leq \max\{1,|\ell^{'}(\phi|z_i)|^3,|\ell^{''}(\phi|z_i)|^3,|\ell^{'''}(\phi|z_i)|^3\} \equiv R_i.$$

First, we prove $\hat{M}_{1i,J}$ has a bias of order $O(m^{-1}).$
Using a Taylor series expansion, we find that

\begin{align*}
\hat{M}_{1i}=M_{1i}+M_{1i}^{' \top}(\phi) (\hat{\phi}-\phi) + \frac{1}{2} M_{1i}^{'' \top}(\phi) (\hat{\phi}-\phi)^2 + \frac{1}{6} M_{1i}^{''' \top}(\phi^*) (\hat{\phi}-\phi)^3,
\end{align*}
for $\phi^*$ between $\phi$ and $\hat{\phi}$. Also, $M_{1i}^{' \top}(\phi)$, $M_{1i}^{'' \top}(\phi)$, and $M_{1i}^{''' \top}(\phi^*)$ stand for the first, second, and third derivatives of $M_{1i}$ with respect to $\phi$. 
Let $\hat{\phi}^{\top}~=~(\hat{\bb},\hat{\sigma}^2_{\nu})$, and it follows that 
\begin{align*}
\hat{M}_{1i}-\hat{M}_{1i(-j)}=\hat{M}^{' \top}_{1i}(\hat{\phi})(\hat{\phi}-\hat{\phi}_{(-j)})+\frac{1}{2} \hat{M}_{1i}^{'' \top}(\hat{\phi}) (\hat{\phi}-\hat{\phi}_{(-j)})^2+\frac{1}{6} \hat{M}_{1i}^{''' \top}(\hat{\phi}^*_{(-j)}) (\hat{\phi}-\hat{\phi}_{(-j)})^3,
\end{align*}
for some $\hat{\phi}^*_{(-j)}$ between $\hat{\phi}_{(-j)}$ and $\hat{\phi}$.
In order to approximate the solution $\hat{\phi}$ of the equation $f(\tau)=\sum_{i=1}^{m}\ell'(\tau|z_i)=0$ in iteration $(\xi+1)$, we use Householder's method (Householder (\citeyear{householder1970numerical}), Theorem 4.4.1). See also Theorem 1 of Lohr and Rao (\citeyear{lohr2009jackknife}):
\begin{align*}
\tau_{\xi+1}= \tau_{\xi}-\frac{f(\tau_{\xi})}{f'(\tau_{\xi})} \Big[1+\frac{\tau_{\xi} f''(\tau_{\xi})}{2 \{f'(\tau_{\xi})\}^2} \Big].
\end{align*}
By taking the initial value $\tau_{\xi}=\phi$, we have

\begin{align*}
\hat{\phi}-\phi & =-\frac{\sum\limits_{i=1}^{m}\ell_i^{'}(\phi|z_i)}{\sum\limits_{i=1}^{m} \ell_i^{''}(\phi|z_i)}\Bigg\{1+\frac{\sum\limits_{k=1}^{m}\ell^{'}_{k}(\phi| z_k) \sum\limits_{r=1}^{m}\ell^{'''}_{r}(\phi|z_r)}{2(\sum\limits_{k=1}^{m}\ell^{''}_{k}(\phi| z_k))^2}\Bigg\}+O_p(|\hat{\phi}-\phi|^3), \quad \text{and} \\
\hat{\phi}-\hat{\phi}_{(-j)} & =\frac{\ell^{'}_j(\hat{\phi}|z_j)}{\sum\limits_{k \neq j}^{m}\ell^{''}_k(\hat{\phi}|z_k)} \Bigg[ 1-\frac{\ell^{'}_j(\hat{\phi}|z_j) \sum\limits_{k \neq j}^{m} \ell^{'''}_k(\hat{\phi}| z_k)}{2( \sum\limits_{k \neq j}^{m} \ell^{''} _k(\hat{\phi}| z_k) )^2} \Bigg]+ O_p(|\hat{\phi}-\hat{\phi}_{(-j)}|^3).
\end{align*}

By taking conditional expectation and using Theorem 2.1 of Jiang et al. (\citeyear{jiang2002unified}), we find that
\begin{align*}
E(\hat{\phi}-\phi|z_i)=\frac{-\ell_i^{'}(\phi|z_i)+\varphi}{\sum\limits_{i=1}^{m}E\{\ell^{''}_i(\phi|z_i)\}}+r_i \, o(m^{-1}),
\end{align*}
where 
\begin{align*}
\varphi= \frac{\sum\limits_{j=1}^{m}E[\ell^{'}_j(\phi|z_j) \ell^{''}_j(\phi|z_j)]}{\sum\limits_{j=1}^{m}E\{\ell^{''}_j(\phi|z_j)\}}-\frac{\sum\limits_{j=1}^{m}\sum\limits_{k=1}^{m}E{[\ell^{'}_j(\phi|z_j)]^2}E(\ell^{'''}_k(\phi|z_k))}{2(\sum\limits_{j=1}^{m}E\{\ell^{''}_j(\phi|z_j)\})^2},
\end{align*}

\begin{align*}
\sum\limits_{j \neq i}^{m}E(\hat{\phi}-\hat{\phi}_{(-j)}|z_i) & =\frac{-\ell^{'}_i(\phi|z_i)+\varphi}{\sum\limits_{j=1}^{m}E\{\ell^{''}_j(\phi|z_j)\}}+r_i \, o(m^{-1}), \quad \text{and} \\
E(\hat{\phi}_{(-i)}-\hat{\phi}|z_i) & = \frac{\ell^{'}_i(\phi|z_i)}{\sum\limits_{j=1}^{m}E\{\ell^{''}_j(\phi|z_j)\}}+r_i \, o(m^{-1}).
\end{align*}

By combining the above results, we find that
\begin{align*}
E\{\hat{M}_{1i,J}-M_{1i}|z_i\} &= -M_{1i}^{'}(\phi|z_i) \ell^{'}_i(\phi|z_i)/ \varphi +r_i \, o(m^{-1}).\\
 \text{Hence,} \quad E(\hat{M}_{1i,J}) & = M_{1i}+O(m^{-1}).
\end{align*}

Second, we prove $\hat{M}_{2i}$ has a bias of order $o(m^{-1})$. Let
\begin{align*}
\hat{Y}_i^{\text{EB}}-\hat{Y}_{i(-j)}^{\text{EB}} := h(\hat{\phi}|z_i)-h(\hat{\phi}_{(-j)}|z_i),
\end{align*}
and $h(\phi|z_i)=E(Y_i^{\text{EB}}|z_i,\phi)$. Using a Taylor series expansion, we find that
\begin{align*}
\hat{Y}_i^{\text{EB}}-\hat{Y}_{i(-j)}^{\text{EB}}=h^{' \top} (\hat{\phi}|z_i) (\hat{\phi}-\hat{\phi}_{(-j)})+\frac{1}{2}h^{'' \top}(\hat{\phi}^*_{(-j)}|z_i) (\hat{\phi}-\hat{\phi}_{(-j)})^2,
\end{align*}
where 
\begin{align*}
h^{' \top}(\hat{\phi}|z_i)=\Big(\frac{\partial h(\hat{\phi}|z_i)}{\partial \bb}, \frac{\partial h(\hat{\phi}|z_i)}{\partial \sigma^2_{\nu}}\Big), \quad h^{'' \top}(\hat{\phi}|z_i)=\Big(\frac{\partial(\partial h(\hat{\phi}|z_i))}{\partial^2 \bb}, \frac{\partial(\partial h(\hat{\phi}|z_i))}{\partial^2 \sigma^2_{\nu}}\Big),
\end{align*}
and $\hat{\phi}^*_{(-j)}$ is between $\hat{\phi}_{(-j)}$ and $\hat{\phi}.$
Using an additional Taylor series expansion, we find that

\begin{align*}
\sum\limits_{j=1}^{m}E\big\{(\hat{Y}_{i(-j)}^{\text{EB}}-\hat{Y}_i^{\text{EB}})^2|z_i\big\} & =\big\{h^{' \top}(\phi|z_i)\big\}^2 \times \frac{\sum\limits_{j=1}^{m}E\{(\ell^{'}_j(\phi|z_j))^2\}}{\varphi^2}+r_i \, o(m^{-1}). \quad \text{Similarly,}\\
E\big\{(\hat{Y}_{i}^{\text{EB}}-\hat{Y}_i)^2|z_i\big\} & =\big\{h^{' \top}(\phi|z_i)\big\}^2 \times \frac{\sum\limits_{j=1}^{m}E\{(\ell^{'}_j(\phi|z_j))^2\}}{\varphi^2}+r_i \, o(m^{-1}).
\end{align*}
By combining the above results, we find that
\begin{align*}
 E(\hat{M}_{2i,J}) &= M_{2i}+o(m^{-1}).
\end{align*}

Finally,
\begin{align*}
E(\text{mspe}_J(\hat{Y}_i^{\text{EB}})) & = E(\hat{M}_{1i,J})+E(\hat{M}_{2i,J}) \\
& = \{M_{1i}+O(m^{-1}) \} + \{ M_{2i}+o(m^{-1}) \} \\
& = M_{1i}+M_{2i} + O(m^{-1}).\\
\end{align*}
Hence, $E[\text{mspe}_J(\hat{Y}_i^{\text{EB}})] =\text{MSPE}(\hat{Y}_i^{\text{EB}})+O(m^{-1})$.

\end{proof}

\subsection{Parametric Bootstrap Estimator of the MSPE}
\label{sec:boot}
In this section, we propose a parametric bootstrap estimator of the MSPE of the EB predictor $\hat{Y}_i^{\text{EB}}$, which we denote it by $\text{mspe}_B (\hat{Y}_i^{\text{EB}}).$ We prove that the order of the bias is correct up to order $O(m^{-1}).$ Specifically, we extend Butar and Lahiri (\citeyear{butar2003measures}) to find a parametric bootstrap of our proposed EB predictor.
To introduce the parametric bootstrap method, consider the following bootstrap model: 
\begin{align*}
\label{eqn:Bootmodel}
z_i^{\star}|\boldsymbol{w}_i^{\star}, \nu_i^{\star} & \stackrel{ind}{\sim} N(\boldsymbol{w}_i^{{\star} \top} \hat{\bb} +\nu_i^{\star}, \psi_i)\\
\boldsymbol{w}_i^{\star} & \stackrel{ind}{\sim} N_p(\boldsymbol{W}_i, \Sigma_i) \\
\nu_i^{\star} & \stackrel{ind}{\sim} N(0, \hat{\sigma}^2_{\nu}). \numberthis
\end{align*}

Recall that from Definition \ref{eqn:mspe}, $\text{MSPE}(\hat{Y}_i^{\text{EB}}) = M_{1i} + E[(\hat{Y}_i^{\text{EB}} - \hat{Y}_i)^2|z_i]$ since $M_{3i}=0$.
We use the parametric bootstrap twice. First, we use it to estimate $M_{1i}$ in order to correct the bias of $\hat{M}_{1i}:=M_{1i}(\hat{\sigma}^2_{\nu}, \hat{\boldsymbol{\beta}})$ (see Eq. (\ref{eq:M1ihat})). Second, we use it to estimate $E[(\hat{Y}_i^{\text{EB}}-\hat{Y}_i)^2|z_i]$. More specifically,
we propose to estimate $M_{1i}$ by 
$2{M}_{1i}(\hat{\sigma}^2_{\nu},\hat{\boldsymbol{\beta}})-E_{\star}[M_{1i}(\hat{\sigma}^{\star 2}_{\nu},\hat{\boldsymbol{\beta}}^{\star})|z_i^{\star}]$,
and $E[(\hat{Y}_i^{\text{EB}}-\hat{Y}_i)^2|z_i]$ by $E_{\star}[(\hat{Y}_i^{\text{EB} \star}-\hat{Y}_i^{\text{EB}})^2|z_i^{\star}]$, where $E_{\star}$ denotes  that the  expectation  is computed  with  respect  to model in Eq. (\ref{eqn:Bootmodel}) and $\hat{Y}_i^{\text{EB} \star}=\exp \{\hat{\gamma}_i^{\star}z_i+(1-\hat{\gamma}_i^{ \star}) \boldsymbol{w}_i^\top \hat{\bb}^{\star} +\psi_i \hat{\gamma}_i^{ \star}/2 \}$. 
In addition, $\hat{\gamma}_i^{ \star}=(\hat{\sigma}^{\star 2}_{\nu}+\hat{\bb}^{\star \top} \Sigma_i \hat{\bb}^{\star})/(\hat{\sigma}^{\star 2}_{\nu}+\hat{\bb}^{\star \top} \Sigma_i \hat{\bb}^{\star}+\psi_i),$ where $\hat{\bb}^{\star}$ and $\hat{\sigma}^{\star 2}_{\nu}$ are estimators of $\bb$ and $\sigma^2_{\nu}$ with respect to the parametric bootstrap model in Eq. (\ref{eqn:Bootmodel}).

Our proposed estimator of $\text{MSPE}(\hat{Y}_i^{\text{EB}})$ is

\begin{equation} \label{eq:3.4}
\text{mspe}_B(\hat{Y}_i^\text{EB}) = 2{M}_{1i}(\hat{\sigma}^2_{\nu},\hat{\boldsymbol{\beta}})-E_{\star}[M_{1i}(\hat{\sigma}^{\star 2}_{\nu},\hat{\boldsymbol{\beta}}^{\star})|z_i^{\star}]+E_{\star}[(\hat{Y}_i^{\text{EB} \star}-\hat{Y}_i^{\text{EB}})^2|z_i^{\star}],
\end{equation}
which has bias of order $O(m^{-1})$ as shown in the Theorem \ref{thm:thm2}.
\begin{theorem}
\label{thm:thm2}
Assume $E_{\star}(\hat{\sigma}_{\nu}^{\star 2}-\hat{\sigma}^2_{\nu})=O_p(m^{-1})$ and $E_{\star}(\hat{\boldsymbol{\beta}}^{\star}-\hat{\boldsymbol{\beta}})=O_p(m^{-1})$. The bootstrap estimator of the MSPE has bias of order $O(m^{-1})$, i.e.
$$E[\text{mspe}_B(\hat{Y}_i^{\text{EB}})]= \text{MSPE}(\hat{Y}_i^{\text{EB}}) + O(m^{-1}).$$
\end{theorem}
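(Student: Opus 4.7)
The plan is to handle the two ingredients of $\text{mspe}_B(\hat{Y}_i^{\text{EB}})$ separately. Writing $\phi := (\bb^\top, \sigma^2_\nu)^\top$, I introduce the smooth map
\begin{equation*}
h(\phi, z_i) := \exp\Big\{\gamma_i z_i + (1-\gamma_i)\bw_i^\top \bb + \tfrac{1}{2}\gamma_i \psi_i\Big\},
\end{equation*}
so that $\hat{Y}_i^{\text{EB}} = h(\hat\phi, z_i)$ and $\hat{Y}_i^{\text{EB}\star} = h(\hat\phi^\star, z_i)$. For the Butar--Lahiri bias-corrected piece targeting $M_{1i}$, I would Taylor-expand $M_{1i}(\hat\phi)$ to second order about $\phi$, using $\hat\phi - \phi = O_p(m^{-1/2})$ and Condition~6 to control the third-order remainder. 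This yields
\begin{equation*}
E[M_{1i}(\hat\phi)] = M_{1i}(\phi) + B_i(\phi) + O(m^{-3/2}),
\end{equation*}
with $B_i(\phi) := M_{1i}'(\phi)^\top E(\hat\phi - \phi) + \tfrac12\,\mathrm{tr}\{M_{1i}''(\phi)\,\mathrm{Var}(\hat\phi)\} = O(m^{-1})$.

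Applying the identical expansion inside the bootstrap world under model~(\ref{eqn:Bootmodel}), together with the hypothesis $E_\star(\hat\phi^\star - \hat\phi) = O_p(m^{-1})$ and a bootstrap-variance expansion $\mathrm{Var}_\star(\hat\phi^\star) = O_p(m^{-1})$ obtained as in Theorem~2.1 of Jiang et al.\ (\citeyear{jiang2002unified}), I would obtain $E_\star[M_{1i}(\hat\phi^\star) \mid z_i^\star] = M_{1i}(\hat\phi) + B_i(\hat\phi) + O_p(m^{-3/2})$. Because $B_i$ is a smooth function of order $m^{-1}$, a one-term expansion gives $E[B_i(\hat\phi)] = B_i(\phi) + O(m^{-2})$, and subtracting produces the standard Butar--Lahiri cancellation
\begin{equation*}
E\bigl[2M_{1i}(\hat\phi) - E_\star[M_{1i}(\hat\phi^\star) \mid z_i^\star]\bigr] = M_{1i}(\phi) + O(m^{-3/2}).
\end{equation*}

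For the second piece, Taylor-expanding $h(\cdot, z_i)$ about $\hat\phi$ inside the bootstrap gives
\begin{equation*}
E_\star\bigl[(\hat{Y}_i^{\text{EB}\star} - \hat{Y}_i^{\text{EB}})^2 \mid z_i^\star\bigr] = h'(\hat\phi, z_i)^\top \mathrm{Var}_\star(\hat\phi^\star)\, h'(\hat\phi, z_i) + O_p(m^{-3/2}),
\end{equation*}
while the analogous expansion of $\hat{Y}_i^{\text{EB}} - \hat{Y}_i = h(\hat\phi, z_i) - h(\phi, z_i)$ gives $M_{2i} = h'(\phi, z_i)^\top \mathrm{Var}(\hat\phi)\, h'(\phi, z_i) + O(m^{-3/2})$. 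Consistency of $\hat\phi$ (Condition~2) and smoothness of $h$, combined with the bootstrap-variance matching, then yield that the second piece has bias $O(m^{-1})$, and adding the two pieces produces $E[\text{mspe}_B(\hat{Y}_i^{\text{EB}})] = \text{MSPE}(\hat{Y}_i^{\text{EB}}) + O(m^{-1})$. The main obstacle will be the bootstrap-variance step: showing $\mathrm{Var}_\star(\hat\phi^\star) - \mathrm{Var}(\hat\phi) = o_p(m^{-1})$ requires porting Conditions~3--6 and the Householder/Jiang-et-al.\ score-equation machinery of the proof of Theorem~\ref{thm:thm1} into the bootstrap world under~(\ref{eqn:Bootmodel}); once in hand, the third-order remainder bookkeeping is controlled by the uniform $(4+\delta)$-moment bound of Condition~6, exactly as in the jackknife proof.
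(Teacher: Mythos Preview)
Your proposal is correct and follows essentially the same route as the paper: Taylor-expand $M_{1i}(\cdot)$ and $h(\cdot,z_i)$ about $\hat\phi$ inside the bootstrap world, match moments, and combine. The paper's proof, however, is considerably shorter because it does not pursue the full Butar--Lahiri cancellation you outline for the $M_{1i}$ piece. Since the target order is only $O(m^{-1})$---not $o(m^{-1})$ or $O(m^{-3/2})$---the paper simply observes that $E_\star[M_{1i}(\hat\phi^\star)\mid z_i^\star]=M_{1i}(\hat\phi)+O_p(m^{-1})$, so $2M_{1i}(\hat\phi)-E_\star[M_{1i}(\hat\phi^\star)\mid z_i^\star]=\hat M_{1i}+O_p(m^{-1})$; together with $E_\star[(\hat Y_i^{\text{EB}\star}-\hat Y_i^{\text{EB}})^2\mid z_i^\star]=\hat M_{2i}+o_p(m^{-1})$ this gives $\text{mspe}_B=\hat M_{1i}+\hat M_{2i}+O_p(m^{-1})$, and the conclusion follows because $E[\hat M_{1i}+\hat M_{2i}]$ already equals $\text{MSPE}+O(m^{-1})$. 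In other words, your second-order bias-matching step $E[B_i(\hat\phi)]=B_i(\phi)+O(m^{-2})$ and the resulting $O(m^{-3/2})$ for the corrected $M_{1i}$ term are sharper than what the theorem asserts, and the paper sidesteps that work (and the associated bootstrap-variance matching $\mathrm{Var}_\star(\hat\phi^\star)-\mathrm{Var}(\hat\phi)=o_p(m^{-1})$ you flag as the main obstacle) by accepting the coarser $O(m^{-1})$ at each stage.
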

\begin{proof}
Let
\begin{equation*}
E_{\star}[M_{1i}(\hat{\sigma}^{\star 2}_{\nu},\hat{\boldsymbol{\beta}}^{\star})|z_i^{\star}] = M_{1i}(\hat{\sigma}^2_{\nu},\hat{\boldsymbol{\beta}}) + O_p(m^{-1}).
\end{equation*}
Assume that $R^{{\star}}_m=O_{p^{\star}}(m^{-1})$ such that $mR^{\star}_m$ is bounded in probability under the parametric bootstrap model in Eq. (\ref{eqn:Bootmodel}). Consider the following Taylor series expansion:

\begin{equation*}
\hat{Y}_i^{\text{EB} \star} - \hat{Y}_i^{\text{EB}} = (\hat{\phi}^{\star} - \hat{\phi})^\top h'(\hat{\phi}| z_i) + R^{\star}_m,
\end{equation*}
such that $\hat{\phi}^{\star \top}=(\hat{\bb}^{\star},\hat{\sigma}^{\star 2}_{\nu})$.

Using an argument similar to the proof of Theorem \ref{thm:thm1}, 
\begin{align}
\label{eqn:star}
E_{\star}[(\hat{Y}_i^{\text{EB} \star}-\hat{Y}_i^{\text{EB}})^2|z_i^{\star}]&=\hat{M}_{2i}+o_p(m^{-1}) \quad \text{and} \quad
E_{\star}[\hat{M}^{\star}_{1i}|z_i^{\star}] =\hat{M}_{1i}+O_p(m^{-1}).
\end{align}
Substituting Eq. (\ref{eqn:star}) into Eq. (\ref{eq:3.4}), we find that
\begin{align*}
\text{mspe}_B(\hat{Y}_i^{\text{EB}})& =2\hat{M}_{1i}-[\hat{M}_{1i}+O_p(m^{-1})]+\hat{M}_{2i}+o_p(m^{-1}) \\
& = \hat{M}_{1i}+\hat{M}_{2i}+O_p(m^{-1}).
\end{align*}
This suggests that
$$ E[\text{mspe}_B(\hat{Y}_i^{\text{EB}})]= \text{MSPE}(\hat{Y}_i^{\text{EB}})+O(m^{-1}). $$
\end{proof}

\section{Experiments}
\label{sec:experiments}
In this section, we investigate the performance of the EB predictors in comparison to the direct estimators through design-based and model-based simulation studies. In addition, we evaluate the MSPE estimators using both a jackknife and parametric bootstrap. 


\subsection{Design-Based Simulation Study}
\label{sec:design}
In this section, we consider a design-based simulation study using the CoG data set as described in Sec. \ref{sec:cog}.

\subsubsection{Design-Based Simulation Setup}
We describe the design-based simulation setup. The parameter of interest is average number of full-time employees per government at the state level from 2012 data set. The covariate
is the average number of full-time employees per government at the state level from the 2007 data set. There are observed skewed patterns in the average number of full-time employees in both 2007 and 2012, which motivates our proposed framework.

For the response variable, we select a total sample of 7,000 governmental units proportionally allocated to the states and for the covariates, we select a total sample of 70,000 units and the survey-weighted averages were then calculated. The measurement error variance $\Sigma_i$ was obtained from a Taylor series approximation, where $\text{Var}(x_i)$ was estimated from the formula of variance in simple random sampling without replacement at each state. The $\psi_i$'s were estimated by a Generalized Variance Function (GVF) method (see  Fay and Herriot (\citeyear{fay1979estimates})). We assume the sampling variances to be known throughout the estimation procedure. 

For the design-based simulation, we draw 1,000 samples and estimate the  parameters from each sample. We evaluate our proposed predictors by empirical MSE per each state $i$:
\begin{align*}
\text{EMSE}(\hat{Y}_i) =\frac{1}{R}\sum\limits_{r=1}^{R}\Big[\hat{Y}_i^{(r)}-Y_i^{(r)}\Big]^2,
\end{align*}  
where $R=1,000$ is the total number of replications, and $\hat{Y}_i$ is the estimator of $Y_i$. 
In addition, when the parametric bootstrap it used, we take $B=1,000$ bootstrap samples. We use the same number of replications and bootstrap samples in the design and model-based simulation studies.

\subsubsection{Design-Based Simulation Results}
In this section, we provide the results of the design-based simulation study. 

\paragraph{Investigating the performance of the proposed estimators}

Recall that the covariate of interest is the average number of full-time employees per government at the state level from 2007 data set, and we wish to predict the average number of full-time employees per government at the state level in 2012. To do so, we give the predictors for each state as well as their corresponding EMSE's in Tables \ref{table1} and \ref{table2}. More specifically, we compare the following three estimators:
\begin{itemize}
\item[1)] $y_i$: the direct estimator,
\item[2)] $\tilde{Y}_i$: the EB predictor, assuming the true covariate $w_i$ and ignoring $\Sigma_i$ in our model,
\item[3)] $\hat{Y}_i^{\text{EB}}$: the EB predictor, assuming the true covariate $w_i$ has measurement error, where $\Sigma_i$ is included in our model. 
\end{itemize}

We observe that in most cases the $\text{EMSE}(\hat{Y}_i^{\text{EB}})$ is smaller than the $\text{EMSE}(\tilde{Y}_i)$. However,  we observe that our proposed EB predictor does not always outperform the direct estimator, which we further explore in our model-based simulation studies in Sec.~\ref{sec:model}.

\begin{table}[ht]
\centering
\caption{Estimators and their empirical MSEs from CoG. Note that $n_i$ is the sample size per state, and the MSEs are rescaled logarithmically.}
\label{table1}
\renewcommand{\arraystretch}{1.25}
\vspace{0.25cm}
\begin{tabular}{@{} ccccccccccc @{}}
\hline
$i$ & State & $n_i$ & $y_i$ & $\tilde{Y}_i$ & $\hat{Y}_i^{\text{EB}}$ & $\text{EMSE}(y_i)$ & $\text{EMSE}(\tilde{Y}_i)$ & $\text{EMSE}(\hat{Y}_i^{\text{EB}})$  \\ \specialrule{.1em}{.05em}{.05em} 
1 & RI & 10 & 191.641 & 202.928 & 204.907 & 6.523 & 5.390 & 5.103 \\
2 & AK & 14 & 132.301 & 120.112 & 123.684 & 4.501 & 6.153 & 5.793 \\
3 & NV & 15 & 420.299 & 422.992 & 431.912 & 8.077 & 8.170 & 8.449 \\
4 & MD & 19 & 824.939 & 784.779 & 794.784 & 8.050 & 5.524 & 6.503 \\
5 & DE & 27 & 64.273 & 72.674 & 73.130 & 3.003 & 5.113 & 5.182 \\
6 & LA & 40 & 363.838 & 342.056 & 343.344 & 6.017 & 0.845 & -2.863 \\
7 & VA & 40 & 560.881 & 526.612 & 530.160 & 6.669 & 8.265 & 8.148 \\
8 & NH & 44 & 80.695 & 83.645 & 83.857 & -3.994 & 2.254 & 2.387 \\
9 & UT & 47 & 126.045 & 117.729 & 118.512 & 2.827 & 2.873 &  2.461 \\
10 & AZ & 49 & 332.610 & 349.704 & 350.915 & 6.270 & 7.380 &  7.439 \\
11 & CT & 49 & 187.500 & 191.494 & 192.054 & 4.851 & 5.457 & 5.528 \\
12 & SC & 53 & 231.790 & 221.881 & 222.574 & 5.158 & 6.279 & 6.218 \\
13 & WV & 53 & 83.568 & 84.071 & 84.315 & 2.754 & 2.482 &  2.336 \\
14 & WY & 55 & 43.084 & 39.629 & 39.795 & 2.493 & 3.873 & 3.824 \\
15 & VT & 59 & 27.717 & 27.529 & 27.574 & 0.474 & 0.751 &  0.688 \\
16 & ME & 65 & 38.892 & 40.713 & 40.789 & 2.966 & 1.899 & 1.840 \\
17 & NM & 67 & 93.817 & 93.360 & 93.913 & 3.496 & 3.331 & 3.529 \\
18 & MA & 70 & 237.066 & 231.213 & 231.999 & 4.218 & 1.741 & 2.310 \\
19 & TN & 72 & 245.317 & 232.133 & 233.405 & 4.257 & 6.144 & 6.023 \\
20 & NC & 76 & 372.264 & 357.791 & 359.375 & 5.846 & 6.997 & 6.700 \\
21 & MS & 78 & 137.977 & 132.143 & 132.454 & 3.891 & 0.299 & 0.774 \\
22 & ID & 92 & 46.450 & 45.642 & 45.784 & 2.451 & 1.910 & 2.016 \\
23 & AL & 93 & 162.841 & 160.389 & 160.818 & 3.356 & 2.131 & 2.406 \\
24 & MT & 99 & 23.296 & 22.457 & 22.508 & 0.461 & 1.481 & 1.433 \\
25 & KY & 104 & 119.415 & 121.105 & 121.576 & 1.935 & 2.928 & 3.134 \\
26 & NJ & 108 & 235.215 & 245.163 & 245.595 & 3.898 & 5.663 & 5.713 \\
27 & GA & 109 & 255.141 & 243.862 & 244.539 & 5.686 & 6.696 & 6.648 \\

\hline

\end{tabular}
\end{table}

\begin{table}[ht]
\centering
\caption{Estimators and their empirical MSEs from CoG. Note that $n_i$ is the sample size per state, and the MSEs are rescaled logarithmically (continued).} 
\label{table2}
\renewcommand{\arraystretch}{1.25}
\vspace{0.25cm}
\begin{tabular}{@{} ccccccccccc @{}}
\hline
$i$ & State & $n_i$ & $y_i$ & $\tilde{Y}_i$ & $\hat{Y}_i^{\text{EB}}$ & $\text{EMSE}(y_i)$ & $\text{EMSE}(\tilde{Y}_i)$ & $\text{EMSE}(\hat{Y}_i^{\text{EB}})$ \\ \specialrule{.1em}{.05em}{.05em} 
28 & AR & 118 & 68.911 & 65.223 & 65.362 & -3.160 & 2.719 &  2.646 \\
29 & OR & 120 & 72.363 & 72.483 & 72.653 & 1.376 & 1.493 & 1.648 \\
30 & FL & 124 & 377.822 & 365.035 & 366.658 & 7.658 & 8.148 & 8.092 \\
31 & OK & 144 & 76.686 & 74.610 & 74.803 & -2.447 & 1.155 & 0.926 \\
32 & WA & 145 & 93.681 & 91.238 & 91.476 & 3.077 & 3.920 & 3.852 \\
33 & SD & 153 & 14.757 & 14.078 & 14.142 & -1.338 & -3.583 & -4.546 \\
34 & IA & 155 & 56.219 & 55.686 & 55.790 & -4.924 & -0.962 & -1.332 \\
35 & CO & 184 & 74.373 & 71.579 & 71.908 & 0.789 & 2.907 & 2.746 \\
36 & NE & 197 & 33.713 & 31.017 & 31.215 & 1.326 & -0.561 & -1.167 \\
37 & IN & 217 & 75.994 & 78.712 & 78.830 & 0.619 & 0.609 & 0.775 \\
38 & ND & 217 & 8.336 & 7.421 & 7.469 & -1.704 & -1.436 & -1.644 \\
39 & MI & 236 & 85.314 & 97.710 & 97.705 & -1.220 & 4.945 & 4.944 \\
40 & WI & 246 & 61.304 & 61.320 & 61.451 & 2.486 & 2.495 & 2.569 \\
41 & NY & 270 & 280.982 & 278.784 & 283.973 & 6.457 & 6.275 & 6.681 \\
42 & MO & 278 & 61.972 & 62.846 & 62.946 &  0.093 & 1.307 & 1.409 \\
43 & MN & 289 & 42.025 & 45.747 & 45.727 & -1.572 & 2.367 & 2.355 \\
44 & OH & 302 & 108.261 & 111.756 & 111.905 & 2.364 & 3.820 & 3.864 \\
45 & KS & 309 & 30.804 & 30.253 & 30.321 & 1.859 & 2.253 & 2.208 \\
46 & CA & 338 & 238.284 & 248.558 & 248.800 & 6.991 & 6.244 & 6.223 \\
47 & TX & 374 & 221.105 & 204.983 & 205.689 & 2.570 & 5.965 & 5.892 \\
48 & PA & 386 & 81.653 & 83.551 & 83.668 & 2.888 & 3.628 & 3.666 \\
49 & IL & 567 & 64.650 & 67.278 & 67.172 & 1.490 & 3.110 & 3.064 \\
\hline
 \vspace{0.25cm}

\end{tabular}
\end{table}

\paragraph{Jackknife versus parametric bootstrap estimators}
Next, we consider the performance of MSPE estimators, i.e., the  jackknife and bootstrap, with respect to the true MSE, i.e., $\text{EMSE}(\hat{Y}_i^{\text{EB}})$ in Figure \ref{figure1}. The results are given on the logarithmic scale, and we observe that the distribution of jackknife is closer to the distribution of true MSE when compared to the bootstrap.
Therefore, we recommend the jackknife given that it slightly overestimates the true MSE.
As already mentioned, given that our proposed estimator does not uniformly beat the direct estimator in terms of the EMSE, we conduct a model-based simulation study in Sec.~\ref{sec:model} to investigate this and provide further insight.

\begin{figure}[h!]
\centering
\begin{tabular}{c}  
 \includegraphics[width=0.8\textwidth]{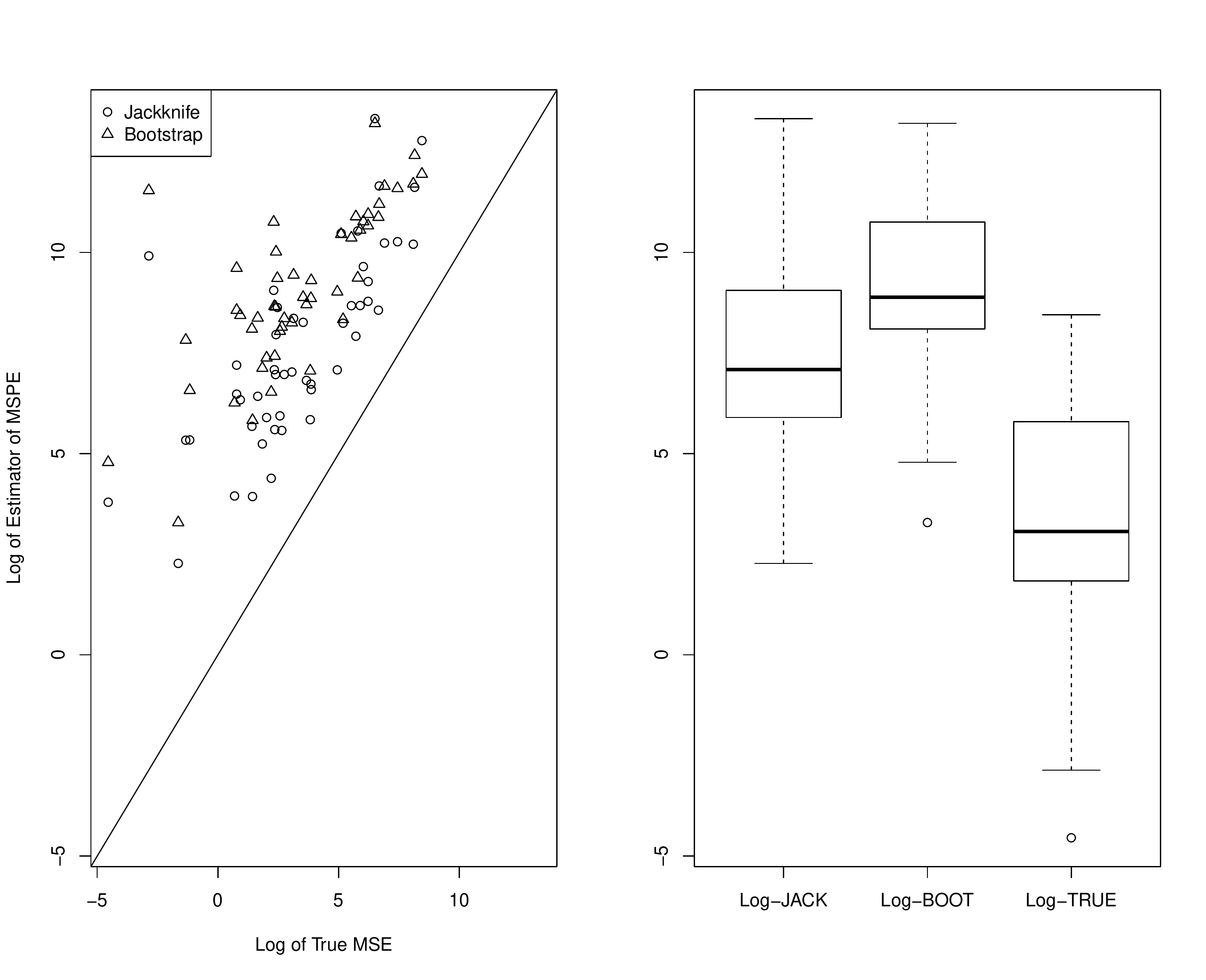} 
\end{tabular}
\caption{Left: The jackknife and the bootstrap estimators versus the true MSE ($\text{EMSE}(\hat{Y}_i^{\text{EB}})$), where the results are rescaled logarithmically. Right: Box plots of the jackknife and the bootstrap estimators and the true MSE ($\text{EMSE}(\hat{Y}_i^{\text{EB}})$), where the results are rescaled logarithmically. In
general, the log of the jackknife is closer to the log of true MSE.}

\label{figure1}
\end{figure}

\subsection{Model-Based Simulation Study}
\label{sec:model}
In this section, we describe our model-based simulation study to further investigate the performance of the proposed  EB predictor $\hat{Y}_i^{\text{EB}}.$ Second, we compare the proposed jackknife and parametric bootstrap estimators, $\text{mspe}_J (\hat{Y}_i^{\text{EB}})$ and $\text{mspe}_B (\hat{Y}_i^{\text{EB}}).$
Third, we investigate how often the variance estimates $\hat{\sigma}_u^2$ are zero. Finally, we investigate how the regression parameter changes when $\Sigma_i$ is misspecified.
Our goal through this model-based simulation study is to understand how one could improve the EB predictor through future research, and to further understand its underlying behavior.

\subsubsection{Model-Based Simulation Setup} 
\label{sec:setup}
In this section, we provide the setup of our model-based simulation study in Table \ref{table3}. This setup follows Eqs. (\ref{eqn:FHlog}) and (\ref{eqn:ME}).
We are interested in comparing the following four estimators:
\begin{itemize}
\item[1)] $y_i$: the direct estimator,
\item[2)] $\hat{Y}_i$: the EB predictor, assuming the true covariate $W_i$
\item[3)] $\tilde{Y}_i$: the EB predictor, assuming the true covariate $w_i$ and ignoring $\Sigma_i$ in our model,
\item[4)] $\hat{Y}_i^{\text{EB}}$: the EB predictor, assuming the true covariate $w_i$ has measurement error, where $\Sigma_i$ is included in our model 
\end{itemize}

We compare these four estimators (for each area $i$) using the empirical MSE: 
$$\text{EMSE}(\hat{Y}_i) = \frac{1}{R}\sum\limits_{r=1}^{R}\Big[\hat{Y}_{i}^{(r)}-Y_{i}^{(r)}\Big]^2,$$
where $\hat{Y}_i$ is the estimator of $Y_i$.

\begin{table}[hb]
\centering
\caption{Model-based simulation setup with definition of parameters and distributions} \label{table3}
\renewcommand{\arraystretch}{1.25}
\begin{tabular}{l}
\\\hline
\textbf{Simulation Setup:}\\
Generate $W_i$ from a Normal(5,9)
 and $\psi_i$ from a Gamma(4.5,2)\\
Take $\theta_i = 3 W_i+\nu_i$, $z_i=\theta_i+e_i$, and $w_i=W_i+\eta_i$\\ 
$\nu_i \sim \text{Normal}(0,\sigma^2_{\nu})$, 
$e_i \sim \text{Normal}(0,\psi_i)$, and $\eta_i \sim \text{Normal}(0,\Sigma_i)$\\
Take $y_i=\exp(z_i)$ and $Y_i=\exp(\theta_i)$\\
\hline\hline 
\textbf{Parameter Definition:} \\
Let $m=20,50,100,$ and $500$ (number of small areas)\\
Let  $\sigma^2_{\nu}=2$ (for all cases) \\
Let $k \in \{0, 20, 50, 80, \text{and} \; 100 \}$ \\
$\Sigma_i \in \{0, d\}$, where $d = 2$ \text{or} $4$ \\
Allow $k\%$ of the $\Sigma_i$'s randomly receive $d$
and the rest $0$.  \\
\hline
\end{tabular}
\end{table}

In order to evaluate the jackknife and parametric bootstrap estimators of $\hat{Y}_i^{\text{EB}}$, we consider the relative bias, denoted by $\text{RB}_J(\hat{Y}_i^{\text{EB}})$ and $\text{RB}_B(\hat{Y}_i^{\text{EB}})$, respectively. More specifically, the relative biases are defined as follows for each area $i$:
\begin{align*}
\text{RB}_J(\hat{Y}_i^{\text{EB}}) &= \Big\{\frac{1}{R} \sum_{r=1}^{R} \text{mspe}_J^{(r)} (\hat{Y}_i^{\text{EB}(r)}) - \text{EMSE}(\hat{Y}_i^{\text{EB}}) \Big\} \Big/\text{EMSE}(\hat{Y}_i^{\text{EB}}),  \\
\text{RB}_B(\hat{Y}_i^{\text{EB}}) &= \Big\{\frac{1}{R} \sum_{r=1}^{R} \text{mspe}_B^{(r)} (\hat{Y}_i^{\text{EB}(r)}) - \text{EMSE}(\hat{Y}_i^{\text{EB}}) \Big\} \Big/\text{EMSE}(\hat{Y}_i^{\text{EB}}).
\end{align*}

\subsubsection{Model-Based Simulation Results} \label{sec:model-results}
In this section, we summarize our results of the model-based simulation study.
\paragraph{Investigating the performance of the proposed estimators}
In this section, we investigate the performance of the proposed estimators.
Table \ref{table4} provides the four estimators given in Sec.~\ref{sec:setup} with their empirical MSEs, where we average the results over all the small areas and re-scale them using the logarithmic scale. When $k=0$, the MSE's for all EB predictors are the same since the term $\Sigma_i$ vanishes and $w_i$ is the same as $W_i$. Overall, as the value of $k$ increases, the empirical MSE increases for almost all predictors. We observe there are cases in which the EB predictors cannot outperform the direct estimators based on the simulation results.

Table \ref{table4} illustrates that there are cases in which the $\text{EMSE}(\hat{Y}_i^{\text{EB}})$ is larger than the $\text{EMSE}(y_i).$ In fact, the EB predictors cannot outperform the direct estimators due to propagated errors in the term $\bb^\top \Sigma_i \bb,$ which is present in the term $\gamma_i$ in the EB predictors through the simulations; see expression (\ref{eqn:EBpredictor}). Therefore, as the measurement error variance $\Sigma_i$ increases, we have shown that the MSE of our proposed EB predictors can also increase. This is the main point that one should notice when using a log-model with measurement error. In order to prevent such behavior, a further adjustment should be made to the EB predictors, which we discuss in Sec.~\ref{sec:dis}.

\begin{table}[h!]
\centering
\caption{Estimators and their empirical MSEs from model-based simulations. The results are averaged over all the small areas and re-scaled logarithmically.} \label{table4}
\renewcommand{\arraystretch}{1.25}
\begin{tabular}{@{} cccccccccc @{}}
\\\hline
m & k & $y_i$ & $\hat{Y}_i$ & $\tilde{Y}_i$ & $\hat{Y}_i^{\text{EB}}$ & $\text{EMSE}(y_i)$ & $\text{EMSE}(\hat{Y}_i)$ & $\text{EMSE}(\tilde{Y}_i)$ & $\text{EMSE}(\hat{Y}_i^{\text{EB}})$ \\

\specialrule{.1em}{.05em}{.05em} 
20 &0 & 46.766 & 50.626 & 50.626 & 50.626 & 102.088 & 111.09 & 111.09 & 111.09  \\
&20 & 53.054 & 50.139 & 49.229 & 49.864 & 115.974 & 109.338 & 104.398 & 108.425\\
&50 & 42.232 & 41.174 & 42.464 & 43.110 & 93.496 & 91.163 & 92.948 & 94.223 \\
&80 & 42.519 & 44.285 & 43.469 & 44.794 & 93.881 & 97.557 & 95.152 & 97.495\\
&100 & 44.682 & 41.81 & 45.073 & 46.331 & 99.13 & 92.161 & 99.938 & 102.48 \\
\hline
50&0 & 49.624 & 47.732 & 47.732 & 47.732 & 110.061 & 106.167 & 106.167 & 106.167 \\
&20 & 44.292 & 42.851 & 44.702 & 45.098 & 97.644 & 94.541 & 99.321 & 100.255 \\
&50 & 45.512 & 44.677 & 46.071 & 47.59 & 99.615 & 98.56 & 101.351 & 105.547\\
&80 & 42.703 & 41.773 & 45.289 & 46.469 & 94.339 & 93.268 & 101.068 & 103.615\\
&100 & 43.83 & 43.201 & 44.779 & 45.68 & 97.144 & 95.961 & 98.347 & 100.319 \\
\hline
100&0 & 42.635 & 42.241 & 42.241 & 42.241 & 94.625 & 92.802 & 92.802 & 92.802\\
&20 & 46.216 & 45.601 & 46.179 & 47.427 & 103.264 & 101.412 & 103.035 & 106.172\\
&50 & 50.93 & 45.982 & 49.347 & 48.519 & 113.343 & 103.08 & 109.718 & 108.214\\
&80 & 50.132 & 46.586 & 48.137 & 48.966 & 111.618 & 103.055 & 106.712 & 108.454\\
&100 & 44.925 & 44.711 & 48.009 & 49.031 & 100.996 & 100.764 & 107.472 & 109.515 \\
\hline
500 & 0 & 47.338 & 45.253 & 45.253 & 45.253 & 107.275 & 103.465 & 103.465 & 103.465 \\
& 20 & 46.382 & 45.369 & 47.575 & 47.652 & 104.607 & 102.867 & 107.896 &  108.169 \\
& 50 & 53.045 & 46.854 & 50.662 & 49.126 & 119.208 & 106.396 & 114.378 & 110.006 \\
& 80 & 47.766 & 44.868 & 47.706 & 49.950 & 108.289 & 104.921 & 107.454 &  112.805 \\
& 100 & 48.586 & 45.313 & 49.372 & 50.449 & 109.795 & 103.378 & 111.069 & 113.218 \\
\hline
\end{tabular}
\end{table}

\paragraph{Jackknife versus parametric bootstrap estimators}
We compare the jackknife MSPE estimator of the EB predictor $\hat{Y}^{\text{EB}}_i$ to that of the bootstrap using the relative bias (see Table \ref{table6}). 
In addition, we consider box plots for the jackknife and bootstrap MSPE estimators of the EB predictor $\hat{Y}^{\text{EB}}_i,$ where we compare these to box plots of the true values (see Figure \ref{figure3}). Both Table \ref{table6} and Figure \ref{figure3} illustrate that the bootstrap receives a large number of negative values, which is due to the construction of $\hat{M}_{1i}.$ 
Here, we find that the bootstrap grossly underestimates the true values, whereas the jackknife slightly overestimates the true values. This could be due to generating data from the normal distribution and the non-linear transformation in the model. Thus, we would recommend the jackknife in practice.

\paragraph{Amount of zeros for the estimates of $\hat{\sigma}_u^2$} Here, we investigate the proportion of zero estimates for $\sigma^2_{\nu}$ based on iteratively solving the Eqs. (\ref{eqn:bb}) and (\ref{eqn:sigma}). Figure \ref{figure2} illustrates that  as the number of small areas increases, the magnitude of receiving zeros decreases. More specifically, we observe when $m=20$ and as $k$ increases, $\hat{Y}_i^{\text{EB}}$ and $\hat{Y}_i$ tend to have a proportion of zero estimates of $\sigma^2_{\nu}$ between 0.3 and 0.5. 
When $m=50$ and as $k$ increases, $\hat{Y}_i^{\text{EB}}$ and $\hat{Y}_i$ tend to have a proportion of zero estimates of $\sigma^2_{\nu}$ between 0.15 and 0.4. 
When $m=100$ and as $k$ increases, $\hat{Y}_i^{\text{EB}}$ and $\hat{Y}_i$ tend to have a proportion of zero estimates of $\sigma^2_{\nu}$ between 0.05 and 0.3.
When $m=500$ and as $k$ increases, $\hat{Y}_i^{\text{EB}}$ and $\hat{Y}_i$ tend to have a proportion of zero estimates of $\sigma^2_{\nu}$ between 0 and 0.05.
One should be cautious of this in practical applications, and adjusting for this is of the interest of future work.

\begin{figure}[ht]
\centering
\begin{tabular}{c}
\includegraphics[width=0.75\textwidth]{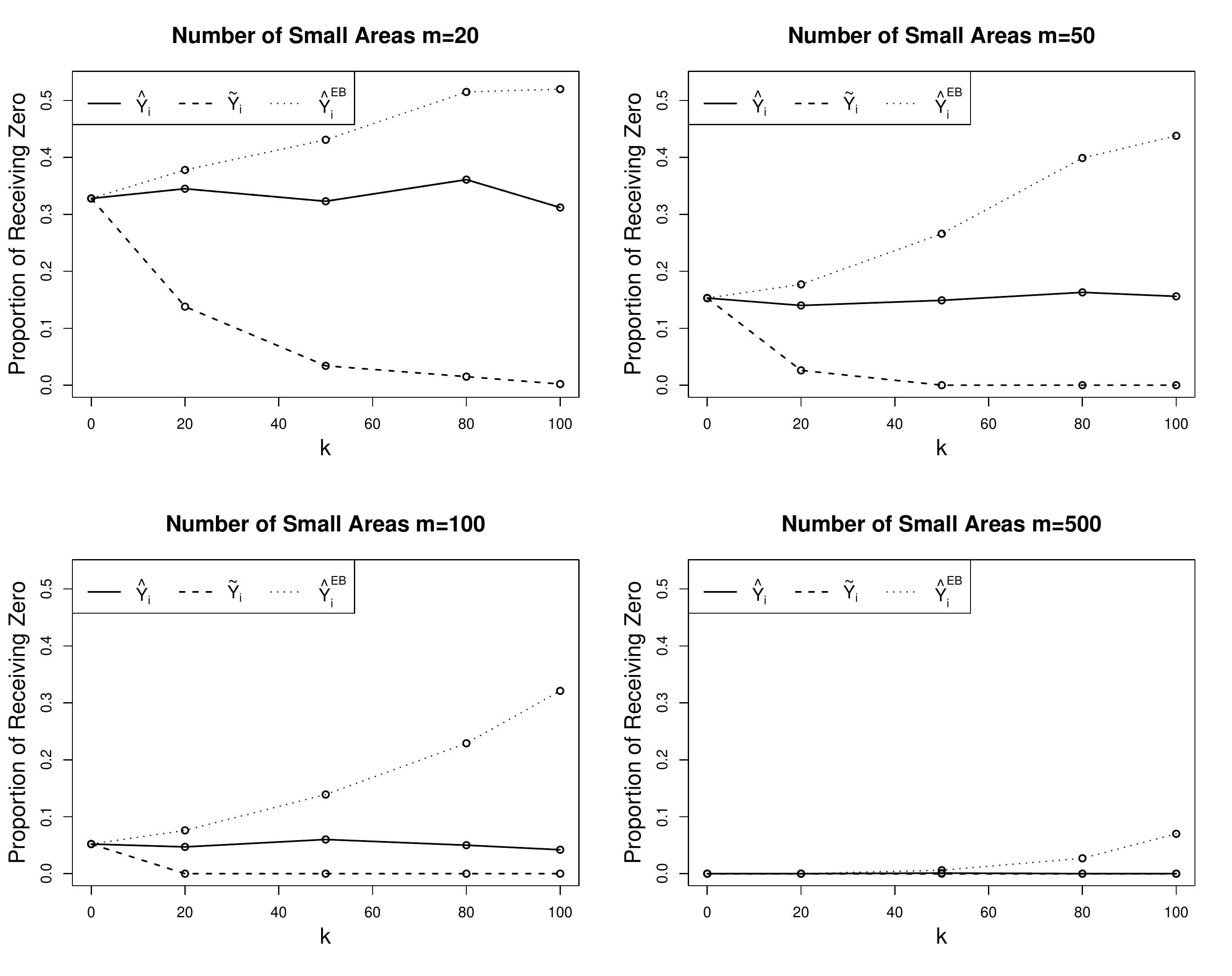}
\end{tabular}
\caption{The proportion of zero estimates of $\sigma^2_{\nu}$ from model-based simulation when we perform 1,000 replications of the simulation study for $k=0,\ldots100$, $m=20,50,100,500$, and $d=2$. 
}
\label{figure2}
\end{figure}

\paragraph{The effect of misspecification of $\Sigma_i$ on $\beta$} \label{sec:beta}
We investigate the effect of mis-specifying the variance $\Sigma_i$ on  the estimation of the regression parameter $\beta.$ To accomplish this, we 
conduct an empirical study based on the proposed model-based simulation study in Table \ref{table3} for the EB predictor $\hat{Y}^{\text{EB}}_i$. 
Assume $\beta=3,$ and we consider two sets of experiments for each value of $k,$ which are summarized in Table \ref{table7}. Recall that $\Sigma_i \in \{0,d \}.$
Denote the first set of experiments by 
$A1, B1, C1, D1$ and $E1,$ where we assume $d=2.$ Denote the misspecified value of $d$ by $d_{\text{mis}} = 4.$ Denote the second set of experiments by 
$A2, B2, C2, D2$ and $E2,$ where we assume $d=4$ and $d_{\text{mis}} = 2.$
We conduct both sets of experiments for $m=20$ and $500$.
For each experiment, we estimate the unknown parameter $\beta$ under the followings: 
(1) the true value of $d$ denoted by $\hat{\beta}$ and (2) the misspecified value of $d_{\text{mis}}$ denoted by $\hat{\beta}_{\text{mis}}$. 

Then we compute the average absolute difference between the respective $\beta$'s by considering the following:
$$ 100 \times \frac{1}{R}\sum\limits_{r=1}^{R}\Big|\hat{\beta}^{(r)}-\hat{\beta}_{\text{mis}}^{(r)}\Big|.$$ 
In addition, we compute the magnitude of bias related to $\hat{\beta}$ and $\hat{\beta}_{\text{mis}}$ with respect to the true value of $\beta=3$ as follows
$$100 \times \frac{1}{R} \sum_{r=1}^{R} \Big(\hat{\beta}^{(r)} -3 \Big) \quad \text{and} \quad 100 \times \frac{1}{R} \sum_{r=1}^{R} \Big( \hat{\beta}^{(r)}_{\text{mis}}-3\Big). $$

Table \ref{table7} illustrates that the overall misspecification of $\Sigma_i$ leads to bias in $\beta.$ When the magnitude of measurement error is zero (i.e. $k=0$), there is no difference between the estimated $\beta$ using $d$ or $d_{\text{mis}}$. On the other hand, when the magnitude of $k$ increases and we have more uncertainty in the error variance $\Sigma_i$, values of $\hat{\beta}$ and $\hat{\beta}_{\text{mis}}$ diverge more from one another, and the magnitude of the bias increases. Also, we observe as the number of small areas increases, the value of bias decreases.
One can resolve this bias issue by constructing an adaptive estimator for $\hat{\beta}_{\text{mis}}$ in which its bias is corrected through some techniques such as bootstrap and develop a test of parameter specification.

\begin{table}[ht]
\centering
\caption{Comparison of the proposed jackknife and bootstrap estimators from model-based simulations. The results are averaged over all small ares. The results are rescaled by the logarithm of absolute value. Note, “*” denote that the original value is negative.}

\label{table6}
\renewcommand{\arraystretch}{1.25}
\begin{tabular}{@{} ccccccc @{}}
\\\hline
m & k & $\text{EMSE}(\hat{Y}_i^{\text{EB}})$ & $\text{mspe}_J(\hat{Y}_i^{\text{EB}})$ & $\text{mspe}_B(\hat{Y}_i^{\text{EB}})$  & $\text{RB}_J(\hat{Y}_i^{\text{EB}})$ & $\text{RB}_B(\hat{Y}_i^{\text{EB}})$ \\\specialrule{.1em}{.05em}{.05em} 
20 & 0 & 111.09 & 120.731 & 118.478* & 9.641 & 7.389* \\
&20 &  108.425 & 115.222 & 115.884* & 6.796 & 7.459*\\
&50 & 94.223 & 107.245 & 111.255* & 13.021 & 17.032*\\
&80 & 97.495 & 113.06 & 129.772* & 15.565 & 32.277*\\
&100 & 102.48 & 115.536* & 119.84* & 13.056* & 17.36*\\
\hline
50 &0 & 106.167 & 112.318* & 116.383* & 6.154* & 10.216* \\
&20 & 100.255 & 110.449 & 106.743* & 10.194 & 6.489* \\
&50 &  105.547 & 115.365* & 118.838* & 9.818* & 13.291*\\
&80 & 103.615 & 114.127 & 112.717* & 10.512 & 9.102*\\
&100 & 100.319 & 110.454* & 112.552* & 10.134* & 12.233*\\
\hline
100&0 & 92.802 & 98.67 & 102.932* & 5.866 & 10.13*\\
&20 & 106.172 & 106.788* & 108.623* & 1.048* & 2.534* \\
&50 & 108.214 & 119.666 & 120.032* & 11.452 & 11.819*\\
&80 & 108.454 & 117.223* & 119.418* & 8.769* & 10.964*\\
&100 & 109.515 & 119.028 & 117.071* & 9.513 & 7.557*\\
\hline
500& 0 & 103.465 & 108.38 & 110.455* & 4.908 & 6.991* \\
& 20 &  108.169 & 119.672 & 115.892 & 11.502 & 7.722 \\
& 50 & 110.006 & 121.191 & 125.754* & 11.184 & 15.747*\\
& 80 & 112.805 & 125.672* & 129.761* & 12.867* & 16.955* \\
& 100 & 113.217 & 123.037* & 124.597* & 9.819* & 11.379* \\
\hline
\end{tabular}
\end{table}

\begin{figure}[ht]
\centering
\begin{tabular}{c}
\includegraphics[width=0.8\textwidth]{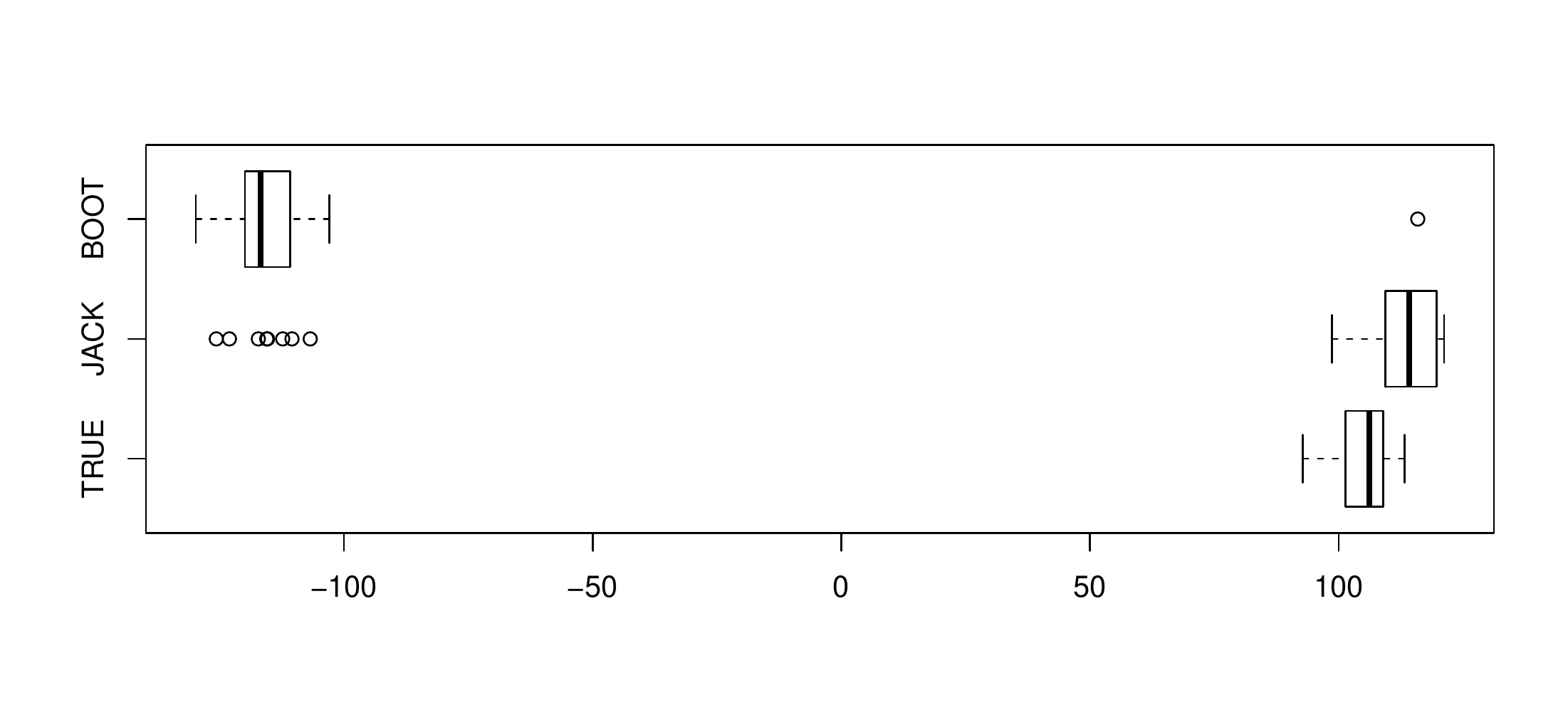}
\end{tabular}
\caption{Comparing the distribution of the jackknife and bootstrap estimators with respect to the true MSE ($\text{EMSE}(\hat{Y}_i^{\text{EB}})$) from the model-based simulations. The results are logarithmically rescaled.}
\label{figure3}
\end{figure}

\begin{table}[ht]
\centering
\caption{Percentage of bias related to the consequences of misspecifying the error variance $\Sigma_i$ on $\beta$ in the EB predictor $\hat{Y}^{\text{EB}}_i$ from model-based simulations. For all cases, we assume the true value for $\beta$ is 3. Also, $\sigma^2_{\nu}=2$ and $m=20$ (the smallest one) and $500$ (the largest one).} 
\label{table7}
\renewcommand{\arraystretch}{1.25}
\vspace{0.25cm}
\begin{tabular}{@{} cccccc @{}}
\hline
m & k & Experiment & $\frac{1}{R}\sum\limits_{r=1}^{R}\Big|\hat{\beta}^{(r)}-\hat{\beta}_{\text{mis}}^{(r)}\Big|$ & $\frac{1}{R}\sum\limits_{r=1}^{R}\Big(\hat{\beta}^{(r)}-3\Big)$ & $\frac{1}{R}\sum\limits_{r=1}^{R}\Big(\hat{\beta}_{\text{mis}}^{(r)}-3\Big)$\\ \specialrule{.1em}{.05em}{.05em} 
20 & $0$ & $A1 (d=2, d_{\text{mis}}=4)$ & $0$ & $-0.026$ & $-0.026$ \\
 \vspace{0.25cm}
 && $A2 (d=4, d_{\text{mis}}=2)$ & $0$ & $-0.237$ & $-0.237$\\ 
& $20$ & $B1 (d=2, d_{\text{mis}}=4)$ & $6.034$ & $0.314$ & $-0.046$\\
 \vspace{0.25cm}
 && $B2 (d=4, d_{\text{mis}}=2)$ & $5.814$ & $-0.634$ & $-0.591$\\ 
&$50$ & $C1 (d=2, d_{\text{mis}}=4)$ & $11.287$ & $-0.184$ & $-0.864$\\
 \vspace{0.25cm}
 && $C2 (d=4, d_{\text{mis}}=2)$ & $10.977$ & $-0.157$ & $-0.394$\\ 
&$80$ & $D1 (d=2, d_{\text{mis}}=4)$ & $19.641$ & $0.842$ & $2.177$\\
 \vspace{0.25cm}
 && $D2 (d=4, d_{\text{mis}}=2)$ & $18.722$ & $0.712$ & $0.327$\\ 
&$100$ & $E1 (d=2, d_{\text{mis}}=4)$ & $25.774$ & $2.650$ & $4.285$\\
 \vspace{0.25cm}
 && $E2 (d=4, d_{\text{mis}}=2)$ & $25.967$ & $4.020$ & $2.937$\\ 
 \hline

500 & $0$ & $A1 (d=2, d_{\text{mis}}=4)$ & $0$ & $0.185$ & $0.185$ \\
 \vspace{0.25cm}
 && $A2 (d=4, d_{\text{mis}}=2)$ & $0$ & $-0.082$ & $-0.082$\\ 
& $20$ & $B1 (d=2, d_{\text{mis}}=4)$ & $1.136$ & $0.174$ & $0.118$\\
 \vspace{0.25cm}
 && $B2 (d=4, d_{\text{mis}}=2)$ & $1.107$ & $0.010$ & $0.009$\\ 
&$50$ & $C1 (d=2, d_{\text{mis}}=4)$ & $2.200$ & $-0.138$ & $0.026$\\
 \vspace{0.25cm}
 && $C2 (d=4, d_{\text{mis}}=2)$ & $2.182$ & $-0.044$ & $0.006$\\ 
&$80$ & $D1 (d=2, d_{\text{mis}}=4)$ & $3.365$ & $0.204$ & $-0.067$\\
 \vspace{0.25cm}
 && $D2 (d=4, d_{\text{mis}}=2)$ & $3.386$ & $-0.139$ & $-0.090$\\ 
&$100$ & $E1 (d=2, d_{\text{mis}}=4)$ & $5.086$ & $0.174$ & $0.152$\\
 \vspace{0.25cm}
 && $E2 (d=4, d_{\text{mis}}=2)$ & $4.941$ & $-0.022$ & $0.117$\\ 
 \hline
\end{tabular}
\end{table}

\clearpage
\newpage
\section{Discussion}
\label{sec:dis}

In this paper, in order to stabilize the skewness and achieve normality in the response variable, we have proposed an area-level log-measurement error model on the response variable. In addition, we have proposed a  measurement error model on the covariates. Second, under our proposed modeling framework, we derived the EB predictor of positive small area quantities subject to the covariates containing measurement
error. Third, we proposed a corresponding estimate of MSPE using a jackknife and a bootstrap method, where we illustrated that the order of the bias is $O(m^{-1})$, where $m$ is the number of small areas.  Fourth, we have illustrated the performance of our methodology in both design-based simulation and model-based simulation studies, where the EMSE of the proposed EB predictor is not always uniformly better than that of the direct estimator. Our model-based simulation studies have provided further  investigation and guidance on the behavior. For example, one fruitful area of future research would be providing a correction to the EB predictor to avoid for such behavior. One way to address this issue is by estimating $\phi=(\bb,\sigma^2_{\nu})^\top$ in such a way that its order of bias is smaller than $O(m^{-1})$. This could help to reduce the amount of propagated errors in the EB predictor $\hat{Y}_i^{\text{EB}}$. Another way, which is less theoretical burdensome is by estimating the covariate as we have assumed it follows a functional measurement error model rather than a structural one. We have studied the MSPE of the  EB predictor using both the jackknife and bootstrap in simulation studies, where we have shown the jackknife estimator performs better than the bootstrap one under our log model.

\bibliographystyle{ims}
\bibliography{Bibliography}

\end{document}